\newcommand{\ket}[1]{\left\vert{#1}\right\rangle}
\newtheorem{theorem}{Theorem}[section]
\begin{document}

\title{Protecting Information Against Computational Errors and Quantum Erasures via Concatenation}

\author{Gilson O. dos Santos$^{1,3}$ and Francisco M. de Assis$^{2,3}$}

\address{$^{1}$ Federal Institute of Education, Science and Technology at Alagoas, Macei\'{o}, Brazil}
\address{$^{2}$ Department of Electrical Engineering, Federal University of Campina Grande, Brazil}
\address{$^{3}$Institute for Studies on Quantum Computation and Quantum Information (IQuanta).}
\ead{gilson.santos@ee.ufcg.edu.br, fmarcos@dee.ufcg.edu.br}
\begin{abstract}
In this work, we introduce a new concatenation scheme which aims at protecting information against the oc\-cur\-rence of both computational errors and quantum erasures. According to our scheme, the internal code must be a quantum loss-correcting code that does not perform measurements, while the external code must be a quantum error-correcting code. We illustrate the concatenation proposed with an example in which one qubit of information is  protected against the occurrence of two erasures and one computational error.\footnote{This work was presented at the IV Workshop-school on Quantum Computation and Information (WECIQ 2012), Fortaleza-CE, Brazil.}
\end{abstract}

\maketitle

\section{Introduction}
One of the most important obstacles in applications of quantum computing and quantum information processing is a phenomenon known as {\it decoherence} \cite{Shor1995,Bus2007}. The use of quantum systems efficiently in various applications of computing and information processing is conditioned to mitigate the effects of decoherence, which can be seen as a consequence of quantum entanglement between the system and the environment. One implication of the occurrence of decoherence is the loss of quantum information \cite{Bouw2000,NieChu2000}.

Effects caused by the decoherence in a quantum state can be characterized as being composed of two types of {\it changes}: ($i$) those that are consistent with the conditions established by Knill and Laflamme \cite{KnilLafl1997} who states that the changes are represented by Pauli matrices $\sigma_{X}, \sigma_{Y}$ and $\sigma_{Z}$ which operate in what is called {\it computational space} \cite{NieChu2000}, and due to that are known as {\it computational errors} \cite{Stace2009}; and ($ii$) those that lead the state encoded out of the computational space, causing loss of the qubit \cite{Lu2008}. One channel model describing this latter type of change is the {\it quantum erasure channel} (QEC), proposed by Grassl et al. \cite{Gras1997} who consi\-de\-red a situation in which the position of the erroneous (lost) qubits is known. 

Computational errors and quantum erasures are types of changes that can occur naturally due to the interaction between quantum systems and their environment. Given the impossibi\-li\-ty of ignoring the existence of such changes, there are reports in the li\-te\-ra\-tu\-re that highlight the practical importance of the development of codes that are capable of protecting against these two types of changes \cite{Duan2010,Dawson2006}.

Grassl {\it et al.} \cite{Gras1997} found that a quantum code can correct $t$ errors at unknown locations if the same code can correct $2t$ errors at known locations (a situation analogous to the QEC). However, although the location of an erroneous qubit is known, it is not always possible to guarantee its correction.  For example, when a qubit suffers a change that takes its state out of the computational space, causing the loss (erasure) of such qubit \cite{Lu2008}. The loss of qubits is frequent in practical implementations with multiple qubits, such as Josephson junctions \cite{Fazio1999}, neutral atoms in optical lattices \cite{Vala2005} and, most notoriously, in single photons that can be lost during processing or may be the loss attributed to the use of inefficient sources and/or inefficient detectors \cite{NLM2001, WaBa2007}. Therefore, it is necessary to develop schemes that address both the occurrence of computational errors and quantum erasures, which are quite recurrent changes in practical applications involving quantum information processing.


In general, it is possible to manipulate existing codes to build a new code suitable for a more general error model. One of the tricks we can use for this is the {\it concatenation} \cite{Forney1966}, which enables to creation of a new code using existing ones. Concatenation, in fact, is considered a basic method for constructing good error-correcting codes. Furthermore, many of binary codes considered to be asymptotically good are constructed by concatenation  \cite{Dumer1998, Li2009}.

The first known applications of concatenated codes in quantum error-correction were reported by \cite{KnillLaf1996, Gott1997}. In the quantum scenario, concatenated codes play a key role in fault-tolerant quantum computing  \cite{KnillLaf1996, KLZ1996, Zalka1996}, and in building good degenerate quantum error-correcting code \cite{Grassl2009, Fujita2006, Cafaro2011}. For example, Gottesman \cite{Gott1997} built a new code by concatenating the five qubit code with itself. The degenerate code of Shor \cite{Shor1995,Cafaro2011} can be constructed by concatenating the bit-flip code with the phase-flip code, both of three qubits.

Although concatenation has already been applied to va\-rious scenarios in quantum information processing, we found no references in the literature that  concatenate a {\it quantum error-correcting code} (QECC) with a {\it quantum loss-correcting code} (QLCC). 

The aim of this paper is to present a concatenation scheme able to protect the information against the occurrence of both computational errors and quantum erasures. In this scheme, a QECC is used as the external code, while the internal code should be a QLCC which does not perform measurement. If this construction is respected, the resulting concatenated code protects the information against both computational errors and quantum erasures. To illustrate, we present a complete example in which one qubit of information is protected against the occurrence of one computational error and two quantum erasures.

This paper is organized as follows. We stablish our main result in Sec. \ref{propconcat}, showing the concetanated code able to protect the information against both  computational errors and quantum erasures. In Sec. \ref{exconcatprop} we present  an example that illustrates the implementation of the proposed concatenation. Finally, in Sec. \ref{concconcat} we present our concluding remarks.

\section{Concatenation of quantum codes}\label{propconcat}

The {\it serial concatenation} (or plainly {\it concatenation}) intend to produce a new code from two existing code, called {\it internal }and {\it external} codes. The {\it external code} $[M,k]$ that encodes $k$ qubits into $M$ qubits (rate $R_{c}^{ext}=k/M$), denoted by $C^{ext}$, is given by $C^{ext}=(E^{ext},D^{ext})$; while the {\it internal code} $[N,M]$ that encodes $M$ qubits into $N$ qubits (rate $R_{c}^{int}=M/N$), denoted by $C^{int}$, is given by  $C^{int}=(E^{int}, D^{int})$ where $E$ and $D$ correspond to the encoding and decoding operations, res\-pectively \cite{Benedetto1996}. Rahn et al. \cite{Rahn2002} showed that the map for a concatenation of codes is given by the composition of the maps for the constituent codes. This description will be used in presenting the following concatenation scheme.

A state of $k$-logical qubits $\rho_{i}$ is first encoded using the external code $C^{ext}$, producing a state of $M$ qubits $E^{ext} [\rho_{i}]$. Each of these qubits is then encoded by the internal code, i.e., using the map $E^{int} \otimes E^{int} \otimes \ldots \otimes E^{int} = (E^{int})^{\otimes M}$ acting on $E^{ext} [\rho_{i}]$. These sequential encodings compose the map encoding of the concatenated code

\begin{equation}\label{efetivenc}
\overline{E} = (E^{int})^{\otimes M} \circ E^{ext}.
\end{equation}

The $M$-qubit sections in $E^{ext}[\rho_{i}]$ are called {\it blocks}, containing $N$ qubits each. After sending through the channel, a noisy process $\overline{\mathcal{N}}$ acts on the $MN$-qubits previously encoded.

A simple error-correction scheme coherently corrects each of the code blocks based on the internal code, and then corrects the entire register based on the external code. Taking this into account, the map of the decoding for the concatenated code is given by

\begin{equation}\label{efetivdec}
\overline{D} = D^{ext} \circ (D^{int})^{\otimes M} .
\end{equation}

We denote the code concatenated with this correction scheme by $Q_{c}$, where

\begin{equation}\label{scheconc}
Q_{c} = C^{ext} (C^{int}) = (\overline{E},\overline{D})
\end{equation}
Note that $Q_{c}$ is a code of $MN$-qubits.

If $Q_{c}$ must be a concatenation of a QECC with a QLCC, then we must take into account that the decoding map (\ref{efetivdec}) must be performed without collapsing the quantum state before the application of decoding $D^{ext}$. To do so, it is necessary that no measurement must occur during the decoding process ($(D^{int})^{\otimes M}$).



Considering that the occurrence of an erasure is, by definition, flagged and located in some way, so naturally we think in first correct this type of change. After that,  we check whether there have been changes that are not flagged. In our context, it means that we must first perform the decoding of the QLCC  and then the decoding of the QECC (see Fig. \ref{esquemaconcat}).

\begin{figure}[h]
\centering
\includegraphics[scale=0.45]{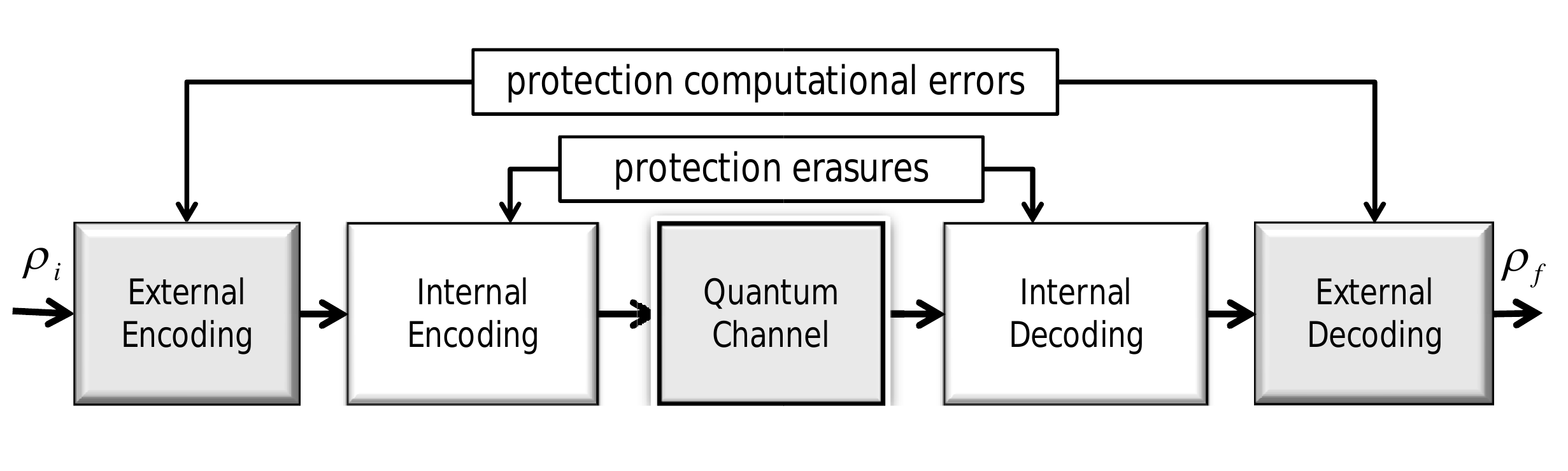}
\caption{\footnotesize Representation of the concatenation scheme for computational errors and erasures.}
\label{esquemaconcat}
\end{figure}


Taking into account the aforementioned considerations, the following theorem synthesizes the general idea of the proposed concatenation scheme.


\begin{theorem}\label{tconcapager}
Let $Q_{c}$ be a scheme as in (\ref{scheconc}) with $C^{int}$ as a QLCC (using unitary operations with auxiliary qubits and not performing measurements) capable to protect against up to $t$ erasures, and with $C^{ext}$ as a QECC capable to protect against up to $s$ computational errors. Then $Q_{c}$ is able to protect the information against the occurrence of up to $s$ computational errors and up to $t$ quantum erasures.
\end{theorem}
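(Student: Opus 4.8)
The plan is to trace an arbitrary input state through the concatenated encoding, the noisy channel, and the two-stage decoding, showing that the erasure-correction layer (the inner QLCC) and the error-correction layer (the outer QECC) do not interfere with one another because of the no-measurement hypothesis. First I would fix notation: an input $\rho_i$ on $k$ qubits is encoded to $\overline{E}[\rho_i] = (E^{int})^{\otimes M} \circ E^{ext}[\rho_i]$, living on $M$ blocks of $N$ qubits each. I would then model the channel $\overline{\mathcal N}$ in the worst case allowed by the hypothesis: up to $t$ of the $MN$ physical qubits are erased (each erasure being flagged and located, per the definition of the QEC) and, in addition, up to $s$ blocks suffer an arbitrary computational (Pauli-type) error of the kind the outer code $C^{ext}$ is designed to correct. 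The key structural point is that we must distribute the $t$ erasures so that at most $t$ of them land in any single inner block — but in fact the cleanest argument handles the harder case where all $t$ could fall in one block, so I would just assume each block sees at most $t$ erasures, which is exactly the QLCC's correction capacity. (If one wants to be careful, one states the theorem for the per-block erasure count; I would note this explicitly.)

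Next I would carry out the inner decoding $(D^{int})^{\otimes M}$. By hypothesis $C^{int}$ is a QLCC that corrects up to $t$ erasures \emph{using only unitary operations on the data together with auxiliary qubits, with no measurement}. Applied block by block, this map restores each of the $M$ logical qubits of the outer code to its pre-channel value, \emph{coherently}: because no measurement is performed, the global state after $(D^{int})^{\otimes M}$ is (up to the discarded ancillas) exactly $\overline{\mathcal E}\big[E^{ext}[\rho_i]\big]$, where $\overline{\mathcal E}$ is the residual noise — namely the $s$ computational errors, now pushed down onto the $M$-qubit level. This is the crux: the no-measurement condition is what guarantees the inner layer behaves as an identity channel on the code space rather than projecting and destroying the superposition that the outer code needs intact. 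I would invoke Rahn et al.'s composition-of-maps description (cited in the excerpt) to make the bookkeeping of "map after map" rigorous, and the Knill--Laflamme framework to justify that a QLCC correcting $t$ located erasures acts as the identity on its codespace under that error model.

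Finally I would apply the outer decoding $D^{ext}$. After the inner stage we are left with the outer codeword $E^{ext}[\rho_i]$ corrupted by at most $s$ computational errors on its $M$ qubits — precisely the situation $C^{ext}$ is built to handle — so $D^{ext}$ returns $\rho_i$. Composing, $\overline{D}\circ\overline{\mathcal N}\circ\overline{E}\,[\rho_i] = \rho_i$ for every input, which is the claim. I expect the main obstacle to be making the middle step fully rigorous rather than heuristic: one must argue carefully that "no measurement during $(D^{int})^{\otimes M}$" really does let the residual computational errors commute past the inner recovery unchanged (or at worst be relabelled into an equivalent error still correctable by $C^{ext}$), and that the ancillas introduced by the QLCC can be traced out cleanly without back-action. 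Everything else — the outer and inner correction guarantees themselves — is imported directly from the stated hypotheses.
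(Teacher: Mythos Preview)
Your proposal is correct and follows essentially the same constructive trace as the paper: encode with $E^{ext}$ then $E^{int}$, let the channel act, decode with $D^{int}$ (coherently, no measurement) to strip the erasures while preserving superposition, then decode with $D^{ext}$ to handle the residual computational errors. The paper's argument is in fact less detailed than yours---it simply asserts (citing \cite{Cai2004}) that the measurement-free QLCC decoding yields a state ``free of erasures which preserves the existing entanglement'' and then hands that state to $D^{ext}$---so the subtleties you flag (the per-block erasure count, and whether physical-level computational errors commute cleanly through $D^{int}$ to become outer-level errors still correctable by $C^{ext}$) are points on which you are being \emph{more} careful than the paper, not less.
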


\begin{proof}
The demonstration will be held in a constructive manner. We assume that $Q_{c}$ as given in (\ref{scheconc}). Thus, the encoding process is as in (\ref{efetivenc}). Consider $C^{ext}$ as a QECC, able to protect the information against the occurrence of up to $s$ computational errors, and $C^{int}$ a QLCC (using unitary operations with auxiliary qubits and not performs measurement) which is capable of protecting against the occurrence information of up to $t$ quantum erasures. Assuming that you want to protect an arbitrary state $\rho_{i}$, then the result of the first step of encoding is given by $\rho ' = E^{ext}[\rho_{i}]$. As we assume that $C^{int}$ is a QLCC, then to complete the encoding process we apply $E^{int}$ in $\rho '$. This will result in

\begin{equation}\label{Qcenc}
\rho '' =E^{int}[\rho '].
\end{equation}

Having completed the $Q_{c}$ encoding, the state obtained in (\ref{Qcenc}) is sent through the quantum channel, being susceptible to the action of a noisy process. Admitting that occurred up to $t$ quantum erasures and up to $s$ computational errors, we have as a result the $\hat{\rho}$ state.

Proceeding with the $Q_{c}$ decoding, if we consider that the QLCC decoding is composed of a unitary operation with auxiliary qubits and not performing measurements, so the QLCC decoding is able to modify the received state such that its result is free of erasures, as established by \cite{Cai2004}. Thus, after the application of $D^{int}$ we obtain $\rho_{d}=D^{int}[\hat{\rho}]$, a state free of erasures which preserves the existing entanglement.

Now $\rho_{d}$ will be handled by $D^{ext}$, the decoding of the QECC. It follows that we must perform a detection procedure, making use of a measurement,  to verify if any computational error occurred and in which position \cite{NieChu2000,Beth1998}. This lets us know what should be the local operation to be performed to obtain the desired original state. Therefore, after applying $D^{ext}$, we have

\begin{equation}
\rho_{f}= D^{ext}[\rho_{d}].
\end{equation}

Thus, if $Q_{c}$ is a concatenated code in which $C^{int}$ is a QLCC (using auxiliary qbits and not performing measurements) and $C^{ext}$ is a QECC, then $Q_{c}$ is able to protect the information against the occurrence of both computational errors and quantum erasures. This concludes our proof.  
\end{proof}

The result of Theorem  \ref{tconcapager} shows that if the concatenation has $C^{int}$ as a code that protects against the occurrence of quantum erasures (which makes use of unitary operation with auxiliary qubits and does not perform measurements) and $C^{ext}$ as a code that protects against the occurrence of computational errors, so this concatenation scheme is able to perform the protection of information against these two types of changes.

One of the advantages of the proposed scheme is that in describing how to concatenate a QECC with a QLCC, we obtain the simultaneous protection of information against the occurrence of computational errors and quantum erasures.

It is important to emphasize that, although there are QECCs and QLCCs, the authors found no reports in the literature to describe a way to combine them. We should also point out that without the definition of the conditions that describe how to combine them, it would not be possible to develop a procedure that concatenates them.

In the next section there is an example that illustrates the implementation of the proposed concatenation.

\section{Example}\label{exconcatprop}

To illustrate how the information is protected by the concatenated code proposed in Theorem \ref{tconcapager}, we present an example in which $C^{ext}$ is the quantum $[[5,1,3]]$ code (obtained via a 3-regular graph), and $C^{int}$ is a QLCC (proposed by \cite{Santos2011}) that protects $5$-qubit of information against the occurrence of two quantum erasures.

Regarding the $[[5,1,3]]$ code, we have $n=5$, $k = 1$ and $d = 3$. According to~\cite{SchlWern2001}, the cardinalities of the sets $\mathcal{X}$ and $\mathcal{Y}$ are given by $|\mathcal{X}| = 1$ and $| \mathcal{Y} | = 5$. The graph representing this code is shown in Fig. \ref{figure:hexacodigo1}. 

\begin{figure}[h]
\centering
\includegraphics[scale=0.2]{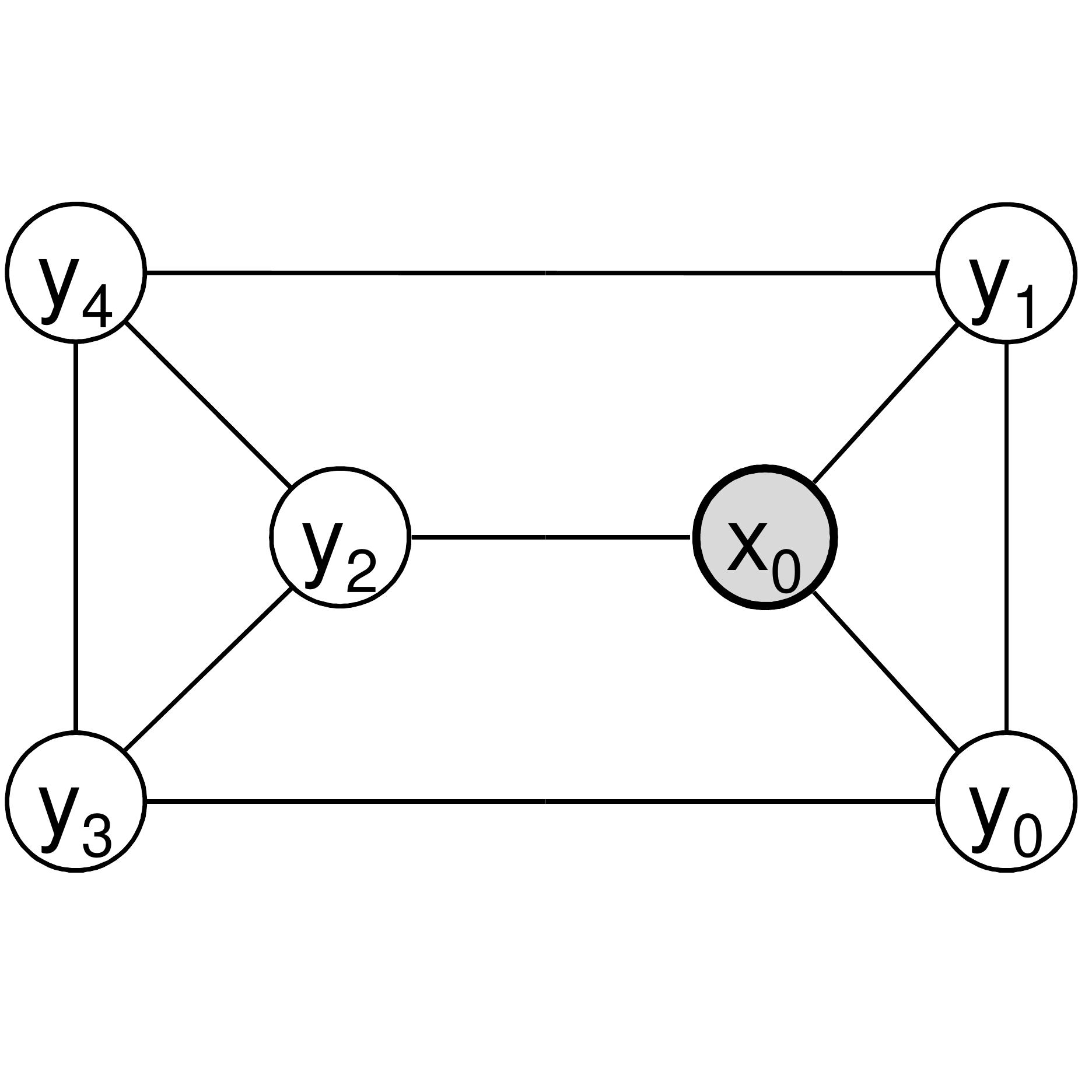}
\caption{\footnotesize $3$-regular graph for the $[[5, 1, 3]]$ code \cite{SchlWern2001}.}
\label{figure:hexacodigo1}
\end{figure}

The adjacency matrix corresponding to the graph in Fig. \ref{figure:hexacodigo1} is:

\begin{equation}\label{mtxgama}
\Gamma = 
\begin{array}{c}
x_{0} \\
y_{0} \\
y_{1} \\
y_{2} \\
y_{3} \\
y_{4} \\ 
\end{array}
\stackrel {
\begin{array}{l l l l l l}
 \ \ x_{0}   &   y_{0}   &   y_{1}   &   y_{2}   &   y_{3}   &   y_{4} \\ 
\end{array} }
{\left(
\begin{array}{l | r r r r r}
        0    & \  1  &  \ 1   &  \  1   & \ 0   & \  0   \\ \hline
       1    &  \ 0  &  \  1   &  \  0    & \ 1   & \  0  \\
       1    &  \ 1  &  \ 0   &  \ 0    & \ 0   &  \ 1  \\
       1    &  \ 0  & \  0   & \  0    & \  1  &  \ 1   \\
       0    &  \ 1  &  \ 0   &  \ 1    & \ 0   & \  1   \\
       0    &  \ 0  &  \  1  &  \  1   & \  1  & \  0  \\
\end{array}
\right). }
\end{equation}

We will consider here the field $\mathbb{F}_{2}=\{0,1\}$, that $x_{0}$ labels the input vertex, and that $y_{0}, y_{1}, y_{2}, y_{3}, y_{4}$ label the output vertices (see Fig. \ref{figure:hexacodigo1}). Hence, \mbox{$d^{\mathcal{X}}=(d^{x_{0}}) \in \mathbb{F}_{2}$} and $d^{\mathcal{Y}}= (d^{y_{0}}, d^{y_{1}}, d^{y_{2}}, d^{y_{3}}, d^{y_{4}}) \in \mathbb{F}_{2}^{5}$.

Following the proposition of Santos et al. \cite{Santos2013}, the encoding operator for this graph is given as follows (throughout this example we will omit the normalization factors):

\begin{eqnarray}\label{eqcod3r}
f(\vert v \rangle ) & = & \sum_{d^{x_{0}}=0}^{1} \left( \sum_{d^{y_{0}}=0}^{1} \sum_{d^{y_{1}}=0}^{1} \sum_{d^{y_{2}}=0}^{1} \sum_{d^{y_{3}}=0}^{1} \sum_{d^{y_{4}}=0}^{1}  e^{(\pi i) [ \gamma ] } \big\vert d^{y_{0}}d^{y_{1}}d^{y_{2}}d^{y_{3}}d^{y_{4}} \big\rangle \right) \nonumber \\
& & c (d^{x_{0}} ) , 
\end{eqnarray}\\
where $\vert v \rangle = \sum_{d^{x_{0}}=0}^{1} c (d^{x_{0}}) \vert d^{x_{0}} \rangle = c(0) \vert 0 \rangle + c(1) \vert 1 \rangle$ and 

\begin{equation}
\gamma =  \left\{ \frac{1}{2}\big[d^{x_{0}},d^{y_{0}},d^{y_{1}},d^{y_{2}},d^{y_{3}},d^{y_{4}} \big]^{T} \ \Gamma \ \left[\scriptsize \begin{array}{c} d^{x_{0}} \\ d^{y_{1}} \\ d^{y_{2}} \\ d^{y_{3}} \\ d^{y_{4}} \\  \end{array} \right] \right\} .
\end{equation}

Therefore, after the encoding, we obtain 

\begin{eqnarray}\label{estadografo}
 f(\vert v \rangle ) \rightarrow \ket{\psi} & = & \Big( \vert 00000  \rangle  + \vert 00001  \rangle  \ldots  - \vert 11110  \rangle \nonumber \\ 
& &  + \vert 11111  \rangle \Big) c(0) + \Big( \vert 00000  \rangle + \vert 00001  \rangle \nonumber \\
& &   + \vert 11110  \rangle + \ldots   - \vert 11111  \rangle \Big) c(1) .    
\end{eqnarray}

The above encoding is representing $E^{ext}$ of $Q_{c}$ and will protect one qubit of information against the occurrence of one computational error. The next step is the internal encoding $E^{int}$ of $Q_{c}$.

Since the resulting state of the encoding $E^{ext}$ is a state of five qubits, then the input state of $E^{int}$ has $k=5$ qubits. This way, $C^{int}$ will make use of $t = \lfloor 5/2 \rfloor = 2$ blocks of $5$ auxiliary qubits each, all initially in the state $\ket{0}$ \cite{Santos2011}. The resulting encoding operation for $E^{int}$ is given as follows:

\begin{equation}\label{enc5}
\ket{\psi}_{GHZ} = U_{enc}\big( \vert \psi \rangle_{(0)} \otimes \vert 00000 \rangle_{(1)}  \otimes \vert 00000 \rangle_{(2)}\big) ,
\end{equation}
where

\begin{eqnarray}\label{openc5}
U_{enc} & = & \prod_{d=0}^{2} \left( \prod_{i=1}^{4} C_{5(d),i(d)} \right)  \prod_{d=0}^{2} \Bigg( H_{5(d)} \Bigg) \nonumber \\
& & \prod_{d=1}^{2} \left( \prod_{i=1}^{5} C_{i(0),i(d)} \right) .
\end{eqnarray}\\
where $C_{x,y}$ representing a Controled-NOT (CNOT) operation and $H$ representing the Hadamard transform, respectively.

Rearranging the result of (\ref{enc5}), we obtain

\begin{eqnarray}\label{fiveancil}
\vert \psi \rangle_{GHZ} & = & \gamma_{0} \vert 0 \rangle_{L} + \gamma_{1} \vert 1 \rangle_{L} + \cdots  + \gamma_{30} \vert 30 \rangle_{L} + \gamma_{31}  \vert 31 \rangle_{L}, \nonumber \\
 & &
\end{eqnarray}\\
where $\gamma_{0} = c(0)  + c(1), \  \gamma_{1} = c(0)  + c(1),   \ldots , \ \gamma_{30} = - c(0)  + c(1), \  \gamma_{31} = c(0)  - c(1)$, and

\begin{eqnarray}\label{statefive}
\vert 0 \rangle_{L} & = & ( \vert 00000 \rangle + \vert 11111 \rangle )_{(0)} \otimes ( \vert 00000 \rangle + \vert 11111 \rangle )_{(1)} \nonumber \\
& & \otimes ( \vert 00000 \rangle + \vert 11111 \rangle )_{(2)} , \nonumber \\
\vert 1 \rangle_{L} & = & ( \vert 00000 \rangle - \vert 11111 \rangle )_{(0)}  \otimes ( \vert 00000 \rangle - \vert 11111 \rangle )_{(1)} \nonumber \\
& & \otimes ( \vert 00000 \rangle - \vert 11111 \rangle )_{(2)} , \nonumber \\
\vdots & \nonumber 
\end{eqnarray}
\begin{eqnarray}
\vert 30 \rangle_{L} & = & ( \vert 11110 \rangle + \vert 00001 \rangle )_{(0)} \otimes ( \vert 11110 \rangle + \vert 00001 \rangle )_{(1)} \nonumber \\
& & \otimes ( \vert 11110 \rangle + \vert 00001 \rangle )_{(2)} , \nonumber \\
\vert 31 \rangle_{L} & = & ( \vert 11110 \rangle - \vert 00001 \rangle )_{(0)} \otimes ( \vert 11110 \rangle - \vert 00001 \rangle )_{(1)} \nonumber \\
& & \otimes ( \vert 11110 \rangle - \vert 00001 \rangle )_{(2)}.
\end{eqnarray}

This completes the encoding $\overline{E}$ of $Q_{c}$, given in (\ref{efetivenc}).

We consider now the situation in which the encoded state $\ket{\psi}_{GHZ}$ undergoes the action of the environment that causes, for example, erasures in the qubit $1$ (phase-flip) of the index block ($0$) and in the qubit $5$ (bit-flip) of the index block $(1)$, as well as a computational error in the qubit $1$ (bit-flip) of the index block $(2)$. Thus, after these changes the resulting state will be $\vert e_{0} \rangle  \otimes \vert \psi \rangle_{GHZ} \rightarrow \vert \overline{\psi} \rangle_{GHZ}$ (where $\ket{e_{0}}$  is the initial state of environment) as follows:

\begin{eqnarray}\label{fivedesc2}
\vert \overline{\psi} \rangle_{GHZ} & = & \lambda_{0} \vert \overline{0} \rangle_{L} + \lambda_{1} \vert \overline{1} \rangle_{L} +  \cdots  + \lambda_{30} \vert \overline{30} \rangle_{L} + \lambda_{31} \vert \overline{31} \rangle_{L}, \nonumber \\
          & &
\end{eqnarray}
where

\begin{eqnarray}\label{fiveerror2}
\vert \overline{0} \rangle_{L} & = & ( \vert \overline{0}0000 \rangle - \vert \overline{1}1111 \rangle )_{(0)} \otimes ( \vert 0000\overline{1} \rangle + \vert 1111\overline{0} \rangle )_{(1)} \nonumber \\
& & \otimes ( \vert \b{1}0000 \rangle + \vert \b{0}1111 \rangle )_{(2)} , \nonumber \\
\vert \overline{1} \rangle_{L} & = & ( \vert \overline{0}0000 \rangle + \vert \overline{1}1111 \rangle )_{(0)} \otimes ( \vert 0000\overline{1} \rangle - \vert 1111\overline{0} \rangle )_{(1)} \nonumber \\
& & \otimes ( \vert \b{1}0000 \rangle - \vert \b{0}1111 \rangle )_{(2)} , \nonumber \\
\vdots & & \nonumber \\
\vert \overline{30} \rangle_{L} & = & ( - \vert \overline{1}1110 \rangle + \vert \overline{0}0001 \rangle )_{(0)} \otimes ( \vert 1111\overline{1} \rangle + \vert 0000\overline{0} \rangle )_{(1)} \nonumber \\
& & \otimes ( \vert \b{0}1110 \rangle + \vert \b{1}0001 \rangle )_{(2)} , \nonumber \\
\vert \overline{31} \rangle_{L} & = & ( - \vert \overline{1}1110 \rangle - \vert \overline{0}0001 \rangle )_{(0)} \otimes ( \vert 1111\overline{1} \rangle - \vert 0000\overline{0} \rangle )_{(1)} \nonumber \\
& & \otimes ( \vert \b{0}1110 \rangle - \vert \b{1}0001 \rangle )_{(2)} . 
\end{eqnarray}

Comparing the logical states of (\ref{fiveerror2}) with those of (\ref{statefive}), we can notice that for each qubit in which erasure occurred (with a bar at the top) of the logical states of (\ref{fiveerror2}), the index block $(2)$ was not affected by erasure. This block, however, has a different type of change (computational error), a bit-flip caused by the environment (indicated by a dash at the bottom).

According to the Theorem \ref{tconcapager}, we must start with the de\-co\-ding of the internal code $D^{int}$ then, only after that, we must perform the decoding of the external code $D^{ext}$.

The decoding $D^{int}$ of the QLCC is given via the restoring operation $\mathcal{R}$ \cite{Santos2011}, which is as follows:

\begin{eqnarray}\label{oprecov5}
\mathcal{R} & = & \bigg[ U_{rec}^{ 5, 1} \circ U_{dec} \Big( \vert \overline{\psi} \rangle_{GHZ} \otimes \ket{00000}_{(3)} \Big) \bigg] \nonumber \\
& & \bigg[ U_{rec}^{ 1, 0} \circ U_{dec} \Big( \vert \overline{\psi} \rangle_{GHZ} \otimes \ket{00000}_{(3)} \Big) \bigg]. 
\end{eqnarray} 

We first perform the $U_{dec}$ operator in  ($\vert \overline{\psi} \rangle_{GHZ} \otimes \ket{00000}_{(3)}$). Recall that this operator acts only in undamaged blocks. After that, we apply the recovery operators $U_{rec}^{a, b}$.

For this case, the $U_{dec}$ operator is given as follows:

\begin{eqnarray}\label{decod5}
U_{dec} & = & \prod_{d=0  (d \notin \{ 0, 1 \} )}^{2} \bigg( \prod_{i=1}^{5} C_{i(3),i(d)} \bigg) \nonumber \\
               &  & \prod_{d=0  (d \notin \{ 0, 1 \} )}^{2} \bigg( \prod_{i=1}^{5} C_{i(d),i(3)} H_{5(d)}  \prod_{i=1}^{4} C_{5(d),i(d)} \bigg) . \nonumber \\
               & &
\end{eqnarray}

Since the erasures occurred in the qubit of position $1$ of index block $(0)$ and in the qubit of position $5$ of the index block $(1)$, then the recovery operators for this situation are given as follows:

\begin{eqnarray}\label{recov510}
U_{rec}^{1, 0} & = &T_{1(3),5(3),4(0)} Z_{5(3), 4(0) } T_{1(3),5(3),4(0)} \nonumber \\
& & \prod_{i=1  (i \neq 1)}^{4} C_{i(3),i(0)} \prod_{i=1 (i \neq 1)}^{5} C_{1(3),i(0)} ,
\end{eqnarray}\\
where $T$ represents a Toffoli operation, $Z$ represents the $\sigma_{Z}$-Pauli controled operation, and in this case $\mathcal{W}_{(0)}= \{1, 2, 3, 4, 5 \} \setminus \{ 1 \} = \{ 2, 3, 4, 5 \}$ with $r = max_{r\neq 5}\{\mathcal{W}_{(0)} \}=4$; and also:

\begin{eqnarray}\label{recov551}
U_{rec}^{ 5, 1} & = & Z_{5(3), 4(1) }  \prod_{i=1}^{4} C_{i(3),i(1)},  
\end{eqnarray}\\
where $\mathcal{W}_{(1)}= \{1, 2, 3, 4, 5 \} \setminus \{ 5 \} = \{ 1, 2, 3, 4 \}$ and $r = max_{r\neq 5}\{\mathcal{W}_{(1)}\}= 4$. 

After applying the operator (\ref{decod5}) and the operators (\ref{recov510}) and (\ref{recov551}), respectively, we obtain

\begin{eqnarray}\label{recerror551}
\vert \overline{0} \rangle_{L} & = & ( \vert \overline{0}1111 \rangle - \vert \overline{1}0000 \rangle )_{(0)} \otimes ( \vert 1000\overline{0} \rangle + \vert 0111\overline{1} \rangle )_{(1)} \nonumber \\
& & \otimes  \vert 00000 \rangle_{(2)} \otimes  \vert \b{1}0000 \rangle_{(3)} , \nonumber \\
\vert \overline{1} \rangle_{L} & = & ( \vert \overline{0}1111 \rangle - \vert \overline{1}0000 \rangle )_{(0)} \otimes ( \vert 1000\overline{0} \rangle + \vert 0111\overline{1} \rangle )_{(1)} \nonumber \\
& & \otimes  \vert 00000 \rangle_{(2)} \otimes  \vert \b{1}0001 \rangle_{(3)} , \nonumber \\
\vdots & & \nonumber \\
\vert \overline{30} \rangle_{L} & = & ( \vert \overline{0}1111 \rangle - \vert \overline{1}0000 \rangle )_{(0)} \otimes ( \vert 1000\overline{0} \rangle + \vert 0111\overline{1} \rangle )_{(1)} \nonumber \\
& & \otimes  \vert 00000 \rangle_{(2)} \otimes  \vert \b{0}1110 \rangle_{(3)} , \nonumber \\
\vert \overline{31} \rangle_{L} & = & ( \vert \overline{0}1111 \rangle - \vert \overline{1}0000 \rangle )_{(0)} \otimes ( \vert 1000\overline{0} \rangle + \vert 0111\overline{1} \rangle )_{(1)} \nonumber \\
& & \otimes  \vert 00000 \rangle_{(2)} \otimes  \vert \b{0}1111 \rangle_{(3)} . 
\end{eqnarray}

Note that in (\ref{recerror551}), the block index $(0)$ as well as the index block $(1)$ now have the same form for all logical states. The system and the environment will be thus in the state:

\begin{eqnarray}\label{oprecov6}
& & \Big( \vert \overline{0}1111 \rangle - \vert \overline{1}0000 \rangle \Big)_{(0)} \otimes \Big( \vert 1000\overline{0} \rangle + \vert 0111\overline{1} \rangle \Big)_{(1)}  \nonumber \\
& & \otimes \Big( \ket{00000} \Big)_{(2)} \otimes \Big( \vert \overline{\psi} \rangle \Big)_{(3)}.
\end{eqnarray}

The received state is now free of quantum erasures. The next step is the external decoding $D^{ext}$ which will be used to verify and correct the occurrence of one computational error. 

The $\vert \overline{\psi} \rangle$ state in (\ref{oprecov6}) will be rewritten as follows, before the application of the procedures of the external decoding:
In order to facilitate understanding of the next step will change the way of representing the $\vert \overline{\psi} \rangle$ state. Thus, it is rewritten as follows

\begin{eqnarray}\label{estadografo4}
\vert \overline{\psi} \rangle & = & \Big( \vert \b{1}0000  \rangle  + \vert \b{1}0001  \rangle + \ldots - \vert \b{0}1110  \rangle \nonumber \\
& & + \vert \b{0}1111  \rangle \Big) c(0) + \Big( \vert \b{1}0000  \rangle  + \vert \b{1}0001  \rangle  + \nonumber \\
& &  \ldots + \vert \b{0}1110  \rangle - \vert \b{0}1111  \rangle \Big) c(1) .     
\end{eqnarray}

The first step in the decoding operation of the quantum graph codes $D^{ext}$ is the calculation of the error syndrome. To so, we apply the $\mathcal{T}$ decoding operator to the $\vert \overline{\psi} \rangle$ state \cite{Santos2013}. Since $\mathcal{T}$ is linear, it is applied to all basis states of $\vert \overline{\psi} \rangle$, resulting in:

\begin{eqnarray}\label{decgrafest5}
 \mathcal{T} (\vert \overline{\psi} \rangle ) & = & \Big[ \mathcal{T}(\vert \b{1}0000  \rangle )  + \mathcal{T}(\vert \b{1}0001  \rangle)  + \ldots - \mathcal{T}(\vert \b{0}1110  \rangle) \nonumber \\
 & & + \mathcal{T}(\vert \b{0}1111  \rangle) \Big] c(0) + \Big[ \mathcal{T}(\vert \b{1}0000  \rangle)    \nonumber \\
& & + \mathcal{T}(\vert \b{1}0001  \rangle)  + \ldots + \mathcal{T}(\vert \b{0}1110  \rangle) \nonumber \\
& &  - \mathcal{T}(\vert \b{0}1111  \rangle) \Big] c(1)  .    
\end{eqnarray}

\begin{figure}[htb]
               \centering
               \includegraphics[scale=0.2]{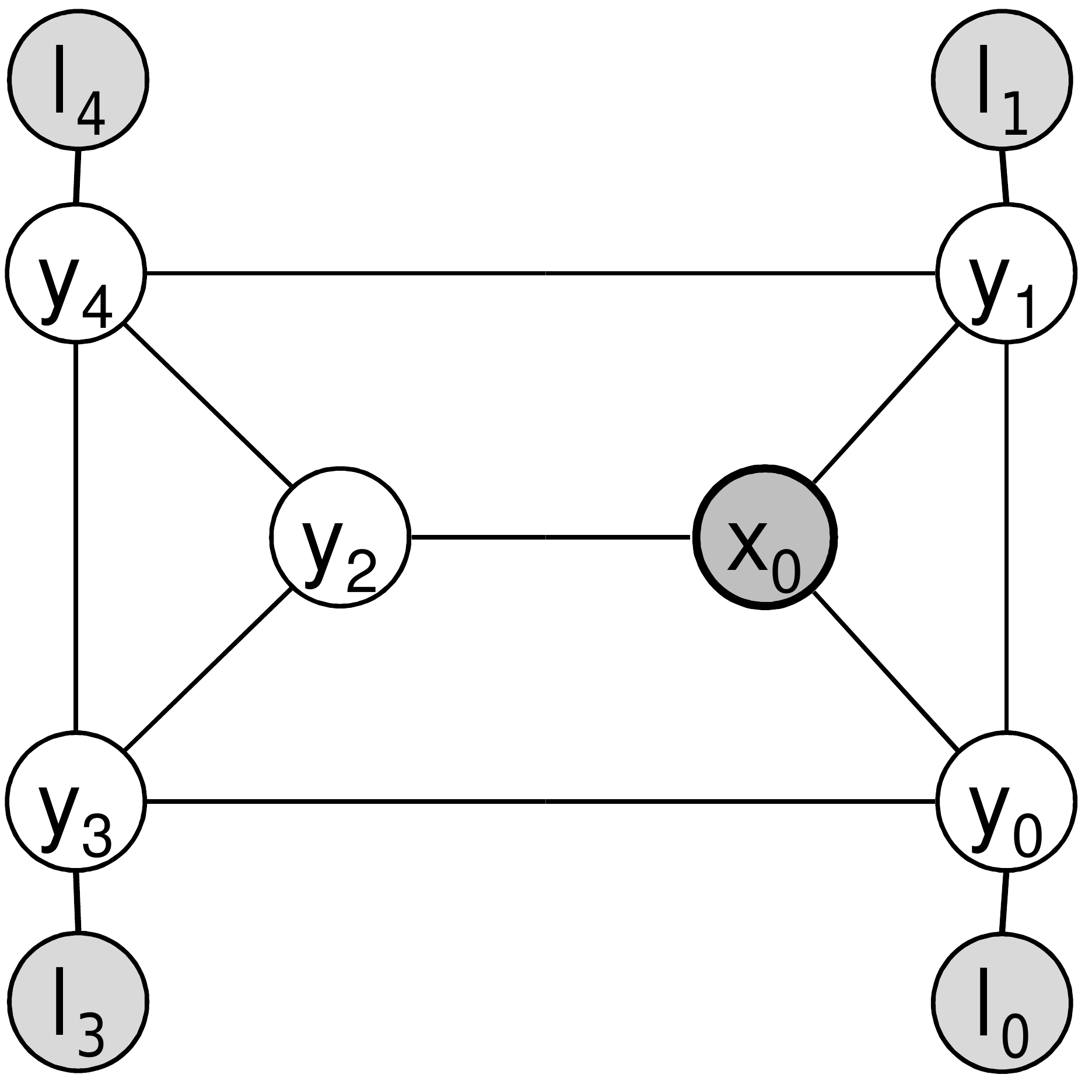}
               \caption{\footnotesize The 3-regular graph for the [[5,1,3]] code with syndrome vertices.}
               \label{figure:g3rsind}
\end{figure}

The representation of the graph decoding for $\vert \overline{\psi} \rangle$ is shown in Fig. \ref{figure:g3rsind}. According to this graph, the $\mathcal{T}$ decoding operator for each basis state is as follows:  

\begin{eqnarray}\label{invg3r}
\mathcal{T} \big( \big\vert d^{y_{0}}d^{y_{1}} d^{y_{2}}d^{y_{3}}d^{y_{4}} \big\rangle \big) & = & \sum_{d^{l_{0}}=0}^{1}\sum_{d^{l_{1}}=0}^{1}\sum_{d^{l_{3}}=0}^{1}\sum_{d^{l_{4}}=0}^{1} \sum_{d^{\widehat{x}_{0}}=0}^{1}  \nonumber \\
& & e^{-(\pi i) \left( \theta \right)} \big\vert d^{l_{0}}d^{l_{1}}d^{l_{3}}d^{l_{4}} d^{\widehat{x}_{0}} \big\rangle, \nonumber \\
& &
\end{eqnarray}\\
where $\theta = d^{\widehat{x}_{0}} d^{y_{0}} + d^{\widehat{x}_{0}} d^{y_{1}} + d^{\widehat{x}_{0}} d^{y_{2}} + d^{y_{0}} d^{y_{1}} + d^{y_{0}} d^{y_{3}} + d^{y_{1}} d^{y_{4}} + d^{y_{2}} d^{y_{3}} + d^{y_{2}} d^{y_{4}} + d^{y_{3}} d^{y_{4}} + d^{y_{0}} d^{l_{0}} + d^{y_{1}} d^{l_{1}} + d^{y_{3}} d^{l_{3}} + d^{y_{4}} d^{l_{4}}$.

Calculating  the expression (\ref{invg3r}) for each basis states of  in (\ref{estadografo4}), results in: 

\begin{eqnarray}\label{finaldecod}
\mathcal{T} (\vert \varphi \rangle ) & = &   \vert 01100  \rangle c(0) - \vert 01101 \rangle c(1) \nonumber \\
                                       & = & \vert 0110  \rangle \ket{0} c(0) - \vert 0110 \rangle \ket{1} c(1) \nonumber \\
                                      & = &   \vert 0 \rangle \vert 1 \rangle \vert 1 \rangle \vert 0 \rangle \Big( c(0) \vert 0 \rangle - c(1) \vert 1 \rangle \Big) . 
\end{eqnarray}

According to (\ref{invg3r}),  the syndromes qubits for this example correspond to the first four qubits in (\ref{finaldecod}). Measuring them in the computational basis, we obtain the {\it error syndrome} that, in this case, is equal to  $0110$. 

The next step is to check the table-lookup (Table \ref{tabsind3r}) to verify the type of computational error it corresponds and the action that must be performed in the fifth qubit. The error syndrome must be located in the Column 1. In this example, we can see that this syndrome indicates that a bit-flip occurred in the qubit $1$ (Column 2); that the general form of the state that must be recovered is $c(0)\ket{0} - c(1) \ket{1}$ (Column 3); and that the corresponding local correction operation is a phase-flip on the fifth qubit (Column 4).

In the final step, we must apply the local correction ope\-ra\-tion obtained from the table-lookup to restore the originally encoded state.

\begin{table}[htb] 
\caption{Table of syndromes: quantum code of 5 qubits via 3-regular graph}
\label{tabsind3r}
\begin{center}
\begin{tabular}{cccc} \hline \hline
Syndrome qubits           & Error (*) & State of      & Correction \\
$q_{1}q_{2}q_{3}q_{4}$ &            &  $q_{5}$ qubit  &  operation (*)  \\ \hline
0000      &     None   &   $c(0) \vert 0 \rangle + c(1) \vert 1 \rangle$         & None \\ 
0001      &        $S_{5}$    &   $c(0) \vert 0 \rangle + c(1) \vert 1 \rangle$      & None \\ 
0010      &        $S_{4}$    &   $c(0) \vert 0 \rangle + c(1) \vert 1 \rangle$      & None \\ 
0011      &        $B_{3}$    &   $c(0) \vert 0 \rangle - c(1) \vert 1 \rangle$       &  $S_{5}$  \\ 
0100      &        $S_{2}$    &   $c(0) \vert 0 \rangle + c(1) \vert 1 \rangle$      & None \\ 
0101      &        $B_{4}$    &   $c(0) \vert 1 \rangle + c(1) \vert 0 \rangle$      & $B_{5}$   \\ 
{\bf 0110} &      $\mathbf{B_{1}}$  & $\mathbf{c(0) \vert 0 \rangle - c(1) \vert 1 \rangle}$  & $\mathbf{S_{5}}$ \\ 
0111      &        $BS_{4}$    &   $- c(0) \vert 1 \rangle - c(1) \vert 0 \rangle$    & $SBS_{5}$ \\ 
1000      &        $S_{1}$    &   $c(0) \vert 0 \rangle + c(1) \vert 1 \rangle$      & None \\ 
1001      &        $B_{2}$    &   $c(0) \vert 0 \rangle - c(1) \vert 1 \rangle$        & $S_{5}$ \\ 
1010      &        $B_{5}$    &   $c(0) \vert 1 \rangle + c(1) \vert 0 \rangle$       & $B_{5}$ \\ 
1011      &        $BS_{5}$    &   $-c(0) \vert 1 \rangle - c(1) \vert 0 \rangle$    & $SBS_{5}$ \\ 
1100      &        $S_{3}$    &   $c(0) \vert 1 \rangle + c(1) \vert 0 \rangle$       & $B_{5}$ \\ 
1101      &        $BS_{2}$    &   $-c(0) \vert 0 \rangle + c(1) \vert 1 \rangle$    & $BSB_{5}$ \\ 
1110      &        $BS_{1}$    &   $-c(0) \vert 0 \rangle + c(1) \vert 1 \rangle$    &  $BSB_{5}$ \\ 
1111      &        $BS_{3}$    &   $-c(0) \vert 1 \rangle + c(1) \vert 0 \rangle$    & $BS_{5}$ \\ \hline
\end{tabular}
\end{center}
{\scriptsize 
(*) $B$ and $S$ denote the bit-flip and phase-flip operations, respectively; the subscript index  $n$ denotes the qubit to which the operations refers.
}
\end{table}

This example illustrates the encoding and decoding ope\-rations of the concatenated scheme proposed in the Theorem \ref{tconcapager}. It also shows explicitly the protection of the information against simultaneous multiple erasures and one computational error. 

It is important to emphasize, in a more general view, that the concatenated code in this example is able to protect against one computational error or two quantum erasures that occur in any of the three blocks, or against the simultaneous occurrence of two erasures and one computational error, since the computational error occur in the undamaged block (without erasure), as illustrated. Is should also be remarked that implementation of this code is feasible since it makes use of well-known quantum gates as CNOT, Hadamard, $\sigma_ {Z}$-Pauli, among others.

\section{Conclusion}\label{concconcat}
We introduced a new concatenation scheme for protecting information against the occurrence of both computational errors and quantum erasures. In the proposed scheme, the code must be a QLCC (which does not perform measurement), while the external code must be a QECC.

The approach presented in this paper concatenates two codes to provide protection against the errors for which they were originally designed. One advantage of this is that it benefits the QECC's and QLCC's that already exist, describing how to combine them.

The concatenation scheme proposed here can be especially interesting from the point of view of applications which consider a noise model to treat the occurrence of noise depolarization (computational error) and the occurrence of loss of photons (erasure). 

The proposed scheme is illustrated by an example in which, using a QECC obtained via a $3$-regular graph \cite{Santos2013} and a QLCC obtained as described in \cite{Santos2011}, we can protect one qubit information against the occurrence of two erasures and one computational error.

We show that one can concatenate a QECC with a QLCC to protect information against the occurrence of both computational errors and quantum erasures. However, it is important to emphasize that the general quantum decoding problem is NP-hard \cite{Hsieh2011}.

For future works, we suggest the concatenation of other QECC's and QLCC's existing in the literature aiming at finding a combination such that the concatenated code resulting would have a better rate than the combination shown here.



\section*{Acknowledgment}

The authors would like to thank A\'{e}rcio F. de Lima and Ello\'{a} B. Guedes for helpful discussions and comments.

\section*{References}

\addcontentsline{toc}{chapter}{Referências Bibliográficas}

\end{document}